\documentclass{article}

\usepackage{amssymb}
\usepackage{eucal}
\usepackage{amsmath}
\usepackage{amsthm}

\textwidth15.8 cm
\topmargin -1.4 cm \textheight24 cm \oddsidemargin.5cm \evensidemargin.5cm

\numberwithin{equation}{section}
\newtheorem{thm}{\bf Theorem}[section]

\newtheorem{defn}{\bf Definition}[section]

\theoremstyle{remark}
\newtheorem{rem}{\bf Remark}[section]

\title{Stability problem for the torque-free gyrostat by using algebraic methods}
\author{Dan Com\u anescu\\
{\small Department of Mathematics, West University of Timi\c soara}\\
{\small Bd. V. P\^ arvan, No 4, 300223 Timi\c soara, Rom\^ ania}\\
{\small E-mail address: comanescu@math.uvt.ro}}
\date{}

\begin{document}

\maketitle

\begin{abstract}
We apply an algebraic method for studying the stability with respect to a set of conserved quantities for the problem of torque-free gyrostat. If the conditions of this algebraic method are not fulfilled then the Lyapunov stability cannot be decided using the specified set of conserved quantities.
\end{abstract}

\textbf{MSC 2010:} 34D20, 37B25, 70E50, 70H14

\textbf{Keywords:} stability, rigid body, gyrostat

\section{Introduction}

A very important problem in the theory of the differential equations is the problem of the stability. A very useful tool for determining stability of an equilibrium point is Lyapunov's direct method connected with the Lyapunov functions. A natural candidate to be a Lyapunov function is a conserved quantity. In a lot of examples coming from mathematical physics, we identify a set $\{F_1,...,F_k\}$ of conserved quantities. In many situations they are not positive definite functions in the equilibrium points of interest. In this situation, a first step to decide if the equilibrium point is stable is to search a Lyapunov function of the form $\Phi(F_1,...,F_k)$, where $\Phi:\mathbb{R}^k\rightarrow \mathbb{R}$ is a smooth function. A function $\Phi(F_1,...,F_k)$ is a Lyapunov function if and only if it is a positive definite function. In Stability Theory are known some important methods to construct positive definite functions using conserved quantities. We remind the so-called "Chetaev's method" presented in \cite{rouche} and some methods which appeared in the context of Hamilton-Poisson systems. In 1965 have been introduced, see \cite{arnold} the Arnold's method. At the beginning of eighties was developed the Energy-Casimir method (see \cite{holm-marsden-ratiu-weinstein-84}, \cite{holm-marsden-ratiu-weinstein-85}) and in 1998 the paper \cite{ortega-ratiu} present the Ortega-Ratiu method. In \cite{birtea-puta} is proved the equivalence of the Arnold's method, the Energy-Casimir method and the Ortega-Ratiu method.

If there exists, for an equilibrium point, a positive definite function of type $\Phi(F_1,...,F_k)$ we say that the equilibrium point is stable with respect to the set of conserved quantities $\{F_1,...,F_k\}$. In 1958 G.K. Pozharitsky, see \cite{pozharitsky}, had proved that it is sufficient to study the function $||(F_1,...,F_k)||$ in order to decide if an equilibrium point is stable with respect to the set of conserved quantities $\{F_1,...,F_k\}$, (see \cite{rouche}).
Another method to decide if an equilibrium point is stable with respect to the set of conserved quantities $\{F_1,...,F_k\}$, see \cite{rouche}, is given by an algebraic method which reduces to study if the equilibrium point $x_e$ is isolated in the set of all the solutions of the algebraic system $F_1(x)=F_1(x_e),...,F_k(x)=F_k(x_e)$. We also show that if the equilibrium point $x_e$ is not isolated in the set of all the solutions of the algebraic system given above, then it is impossible to construct a Lyapunov function in $x_e$ using the set of conserved quantities $\{F_1,...,F_k\}$.

We apply this algebraic method to decide the stability of an equilibrium point with respect to a set of conserved quantities for the problem of torque-free gyrostat.
In Section 2 are presented some notions and results on Stability Theory which we apply in the study of our example.

In Section 3 we present the mathematical model of a torque-free gyrostat and we give a set of two functionally independent conserved quantities. We find the set of uniform rotations and we first study their stability  with respect to a single conserved quantity. In the cases when the vector of gyrostatic moment is situated along a principal axis of inertia of the gyrostat, we study the stability of an uniform rotation with respect to the set of the two conserved quantities. We prove that an uniform rotation is stable with respect to the given set of conserved quantities if and only if it is stable in the sense of Lyapunov. It is interesting to see that exist some singular cases for which we cannot decide the Lyapunov stability of an uniform rotation using the algebraic method or the linearization method.

In a future study we will apply the algebraic method which is used in this paper for the problem of the rotational motion of a
gyrostat in the presence of an axisymmetric force field. We take advantage of the set of conserved quantities found in \cite{birtea-casu-comanescu}.

\section{Lyapunov's direct method solving algebraic equations}

We consider an open set $D\subset\mathbb{R}^n$ and the locally Lipschitz function $f:D\rightarrow \mathbb{R}^n$ which generates the differential equation
\begin{equation}\label{ecuatie-diferentiala}
    \dot{x}=f(x).
\end{equation}
We denote by $x(\cdot, x_0)$ the maximal solution of the above differential equation which verify the initial condition $x(0,x_0)=x_0$.
A point $x_e\in D$ is an equilibrium point of \eqref{ecuatie-diferentiala} if and only if $f(x_e)=0$. An equilibrium point $x_e\in D$ is stable (or stable in the sense of Lyapunov) if for all $\varepsilon>0$ there exists $\delta>0$ such that for all $y$ in the ball $B(x_e,\delta)$ and $t\geq 0$ we have $||x(t,y)-x_e||<\varepsilon $ (see \cite{perko}).
The most important result for proving stability of an equilibrium point is given by Lyapunov's direct method.

\begin{thm}
 Suppose there exists a continuous function $V:D\rightarrow \mathbb{R}$ satisfying the conditions:
\begin{itemize}
\item [i)] $V(x_e)=0$;
\item [ii)] $V(x)>0$ for $x$ in a neighborhood of $x_e$ and $x\neq x_e$;
\item [iii)] $ t\rightarrow V(x(t,y))$ is a decreasing function for all $y\in D$.
\end{itemize}
Then the equilibrium point $x_e$ is stable.
\end{thm}

A continuous function which satisfies the conditions $i)$ and $ii)$ is called a positive definite function in the equilibrium point $x_e$.
A continuous function $V$ satisfying the hypotheses of the above theorem is called Lyapunov function in the equilibrium point $x_e$. We introduce the following notion of stability.

\begin{defn}
The equilibrium point $x_e$ of \eqref{ecuatie-diferentiala} is stable with respect to the set of conserved quantities $\{F_1,...,F_k\}$ if there exists a continuous function $\Phi:\mathbb{R}^k\rightarrow \mathbb{R}$ such that $x\rightarrow \Phi(F_1,....,F_k)(x)-\Phi(F_1,....,F_k)(x_e)$ is a positive definite function in $x_e$.
\end{defn}

In the conditions of the above definition, the function $x\rightarrow \Phi(F_1,....,F_k)(x)-\Phi(F_1,....,F_k)(x_e)$ is a Lyapunov function in the equilibrium point $x_e$. We have the obvious consequences.
\begin{thm}\label{subset-stability}
Let $x_e$ be an equilibrium point and $\{F_1,...,F_k\}$ be a set of conserved quantities for \eqref{ecuatie-diferentiala}.
\begin{itemize}
\item [(i)] If $x_e$ is stable with respect to the set $\{F_1,...,F_k\}$ then it is stable in the sense of Lyapunov.

\item [(ii)] Let $q\in \{1,...,k\}$ be an integer number. If $x_e$ is stable with respect to $\{F_1,...,F_q\}$, then it is stable with respect to $\{F_1,...,F_k\}$.
    \end{itemize}
\end{thm}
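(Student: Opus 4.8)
The plan is to prove both parts directly from the Definition of stability with respect to a set of conserved quantities and from Theorem~1.1 (Lyapunov's direct method). The key observation throughout is that each $F_i$ is a conserved quantity, so any composition $\Phi(F_1,\dots,F_k)$ is constant along solutions; hence the map $t\rightarrow \Phi(F_1,\dots,F_k)(x(t,y))$ is constant, and in particular non-increasing. This gives condition (iii) of Theorem~1.1 for free, which is precisely why a positive definite function built from conserved quantities is automatically a Lyapunov function.

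For part (i), I would start from the hypothesis that $x_e$ is stable with respect to $\{F_1,\dots,F_k\}$. By the Definition, there is a continuous $\Phi:\mathbb{R}^k\rightarrow\mathbb{R}$ such that $V(x):=\Phi(F_1,\dots,F_k)(x)-\Phi(F_1,\dots,F_k)(x_e)$ is positive definite in $x_e$, i.e. $V$ satisfies conditions (i) and (ii) of Theorem~1.1. To invoke Theorem~1.1 I only need to verify condition (iii) for this $V$. Since each $F_j$ is conserved, $F_j(x(t,y))=F_j(y)$ for all $t$ in the domain of the solution, so $V(x(t,y))=\Phi(F_1(y),\dots,F_k(y))-\Phi(F_1,\dots,F_k)(x_e)=V(y)$ is constant in $t$, hence trivially a decreasing (non-increasing) function. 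Thus $V$ is a Lyapunov function at $x_e$, and Theorem~1.1 yields Lyapunov stability.

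For part (ii), I would argue that any Lyapunov-type function constructed from the smaller set is automatically one constructed from the larger set. Assume $x_e$ is stable with respect to $\{F_1,\dots,F_q\}$, so there is a continuous $\Psi:\mathbb{R}^q\rightarrow\mathbb{R}$ making $x\rightarrow\Psi(F_1,\dots,F_q)(x)-\Psi(F_1,\dots,F_q)(x_e)$ positive definite in $x_e$. I define $\Phi:\mathbb{R}^k\rightarrow\mathbb{R}$ by $\Phi(u_1,\dots,u_k):=\Psi(u_1,\dots,u_q)$, which simply ignores the last $k-q$ arguments and is clearly continuous. Then $\Phi(F_1,\dots,F_k)=\Psi(F_1,\dots,F_q)$ as functions on $D$, so the associated positive definite function is unchanged, and the Definition is satisfied for $\{F_1,\dots,F_k\}$.

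I do not anticipate a genuine obstacle here, since the statement is labelled an obvious consequence; the only point requiring slight care is making explicit that a constant function qualifies as decreasing in the sense of condition (iii), which is why conservation of the $F_j$ is exactly the ingredient that promotes positive definiteness to the full Lyapunov hypothesis. The extension argument in part (ii) is purely formal, relying on the fact that continuity is preserved when padding $\Psi$ with dummy variables.
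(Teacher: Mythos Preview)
Your proof is correct and matches the paper's treatment: the paper does not give a formal proof but simply remarks that the function $x\mapsto\Phi(F_1,\dots,F_k)(x)-\Phi(F_1,\dots,F_k)(x_e)$ is a Lyapunov function at $x_e$ and then states the theorem as an ``obvious consequence''. Your argument makes this explicit in exactly the intended way, using conservation of the $F_j$ to get condition~(iii) of Theorem~2.1 for part~(i), and padding $\Psi$ with dummy variables for part~(ii).
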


We have the following equivalent conditions for the stability of an equilibrium point with respect to a set of conserved quantities.

\begin{thm}\label{stability}
Let $x_e$ be an equilibrium point of \eqref{ecuatie-diferentiala} and $\{F_1,...,F_k\}$ a set of conserved quantities. The following statements are equivalent:
\begin{itemize}
\item[(i)] $x_e$ is stable with respect to the set of conserved quantities $\{F_1,...,F_k\}$;
\item [(ii)] $x\rightarrow ||(F_1,...,F_k)(x)-(F_1,...,F_k)(x_e)||$ is a positive definite function in $x_e$;
\item [(iii)] the system $F_1(x)=F_1(x_e),...,F_k(x)=F_k(x_e)$ has no root besides $x_e$ in some neighborhood of $x_e$.
\end{itemize}
\end{thm}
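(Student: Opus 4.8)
The plan is to establish the cycle of implications $(ii)\Rightarrow(i)$, $(i)\Rightarrow(iii)$, and $(iii)\Rightarrow(ii)$, which together yield all three equivalences. Throughout I write $F=(F_1,\dots,F_k):D\rightarrow\mathbb{R}^k$ for the map assembled from the conserved quantities and $c=F(x_e)$ for its value at the equilibrium, so that the function appearing in $(ii)$ is $V(x)=||F(x)-c||$ and statement $(iii)$ asserts that $x_e$ is an isolated point of the level set $\{x\in D: F(x)=c\}$. The two structural facts I will exploit are, first, that a positive definite function (in the sense defined above) is continuous and vanishes only at $x_e$ on a neighborhood, and second, that any function of the form $\Phi\circ F$ is constant on the level sets of $F$.

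For $(ii)\Rightarrow(i)$ I simply exhibit a suitable $\Phi$: taking $\Phi(y)=||y-c||$, which is continuous on $\mathbb{R}^k$, gives $\Phi(F(x))-\Phi(F(x_e))=||F(x)-c||-0=V(x)$, so if $V$ is positive definite in $x_e$ then $x_e$ is stable with respect to $\{F_1,\dots,F_k\}$ by definition.

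The implication $(i)\Rightarrow(iii)$ is the conceptual heart of the argument, and it is where the hypothesis of $(i)$ must be used with care. Assume $x_e$ is stable with respect to $\{F_1,\dots,F_k\}$, so there is a continuous $\Phi$ for which $W(x)=\Phi(F(x))-\Phi(c)$ is positive definite in $x_e$; let $U$ be a neighborhood on which $W(x)>0$ for $x\neq x_e$. Now suppose $x\in U$ is any root of the algebraic system, i.e. $F(x)=c$. Because $W$ factors through $F$, this forces $W(x)=\Phi(c)-\Phi(c)=0$, and positive definiteness then excludes $x\neq x_e$. Hence $x_e$ is the only root of the system in $U$, which is exactly $(iii)$. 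The point to get right here is that $(i)$ supplies only the existence of some $\Phi$; the argument must avoid any appeal to properties of a specific $\Phi$ and instead rely solely on the fact that $\Phi\circ F$ cannot separate two points carrying the same $F$-value.

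Finally, $(iii)\Rightarrow(ii)$ closes the cycle. The function $V(x)=||F(x)-c||$ is continuous (as $F$ and the norm are continuous) and satisfies $V(x_e)=0$ and $V(x)\geq 0$ everywhere, so it is a candidate positive definite function; moreover $V(x)=0$ holds if and only if $F(x)=c$, i.e. if and only if $x$ is a root of the system in $(iii)$. By $(iii)$ there is a neighborhood of $x_e$ in which the only such root is $x_e$ itself, so on that neighborhood $V(x)>0$ for $x\neq x_e$, giving $(ii)$. I expect no genuine obstacle in this last step or in $(ii)\Rightarrow(i)$; the only substantive idea is the level-set observation used in $(i)\Rightarrow(iii)$, with the remaining work being the bookkeeping of shrinking neighborhoods and the routine verification that each candidate function meets the continuity requirement built into the definition of positive definiteness.
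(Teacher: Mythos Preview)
Your argument is correct: the cycle $(ii)\Rightarrow(i)\Rightarrow(iii)\Rightarrow(ii)$ closes cleanly, with the only substantive step being the observation that any $\Phi\circ F$ is constant on level sets of $F$, which forces $(i)\Rightarrow(iii)$.

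The paper does not actually supply its own proof of this theorem; it attributes the equivalence $(i)\Leftrightarrow(ii)$ to Pozharitsky (1958) and the equivalence $(ii)\Leftrightarrow(iii)$ to Rouche--Habets--Laloy, and leaves it at that. Your write-up therefore goes beyond what the paper offers by giving a short self-contained argument. The organization also differs slightly: the references cited by the paper split the result into the two bi-implications $(i)\Leftrightarrow(ii)$ and $(ii)\Leftrightarrow(iii)$, whereas you run a single three-step cycle. Your route is arguably more economical, since it avoids proving $(i)\Rightarrow(ii)$ directly (which is the nontrivial direction of Pozharitsky's result) and instead reaches $(ii)$ from $(i)$ by passing through $(iii)$ via the elementary level-set remark.
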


In 1958, G.K. Pozharitsky had proved the equivalence between $(i)$ and $(ii)$, see \cite{pozharitsky}, \cite{rouche} pp. 130. Equivalence between $(ii)$ and $(iii)$ appears in \cite{rouche} pp. 151.
In the paper \cite{aeyels}, Aeyels had presented an interesting proof for the implication "$(iii)\Rightarrow x_e$ {\it is Lyapunov stable}".
\medskip

The Theorem \ref{stability} $(iii)$ gives an algebraic method for establishing Lyapunov stability of an equilibrium point. Moreover, it also shows that if the equilibrium point $x_e$ is not isolated in the set of solutions for the algebraic system of equations then it is impossible to construct a Lyapunov function in $x_e$ using the set of conserved quantities $\{F_1,...,F_k\}$. We will apply this algebraic method to study the stability of uniform rotations for a torque-free gyrostat.

\medskip

Using the implicit function theorem we have the following necessary but not sufficient condition for positive definiteness of the function given in Theorem \ref{stability} $(ii)$, see \cite{rouche} pp.151.

\begin{thm}\label{rank-jacobian}
Let $x_e$ be an equilibrium point of \eqref{ecuatie-diferentiala} and $\{F_1,...,F_k\}$ be a set of $\mathcal{C}^1$ conserved quantities. A necessary condition for the stability of $x_e$ with respect to set of conserved quantities $\{F_1,...,F_k\}$ is that the jacobian matrix $\frac{\partial(F_1,...,F_k)}{\partial x}(x_e)$ be of rank strictly smaller then $k$.
\end{thm}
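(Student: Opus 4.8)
The plan is to prove the contrapositive and to reduce everything to the isolation criterion furnished by Theorem~\ref{stability}. Since the Jacobian $\frac{\partial(F_1,\ldots,F_k)}{\partial x}(x_e)$ is a $k\times n$ matrix, its rank can never exceed $k$; hence the negation of the conclusion is simply that this matrix has full rank $k$, equivalently that the gradients $\nabla F_1(x_e),\ldots,\nabla F_k(x_e)$ are linearly independent. Assuming this, I would aim to show that $x_e$ is \emph{not} isolated among the solutions of the algebraic system $F_1(x)=F_1(x_e),\ldots,F_k(x)=F_k(x_e)$. By the equivalence $(i)\Leftrightarrow(iii)$ of Theorem~\ref{stability}, this is precisely the assertion that $x_e$ is not stable with respect to $\{F_1,\ldots,F_k\}$, which is what the contrapositive demands. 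Note that, because of this reduction, the conservation property of the $F_i$ is already absorbed into Theorem~\ref{stability} and need not be re-invoked directly.

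To exploit the full-rank hypothesis I would appeal to the implicit function theorem, exactly as the remark preceding the statement suggests. Full rank means that, after a permutation of the coordinates $x=(x_1,\ldots,x_n)$, the $k\times k$ minor $\frac{\partial(F_1,\ldots,F_k)}{\partial(x_1,\ldots,x_k)}(x_e)$ is invertible. Writing $x=(u,v)$ with $u=(x_1,\ldots,x_k)$ and $v=(x_{k+1},\ldots,x_n)$, and denoting $x_e=(u_e,v_e)$, the implicit function theorem produces a $\mathcal{C}^1$ map $u=\varphi(v)$, defined for $v$ near $v_e$ with $\varphi(v_e)=u_e$, such that the local solution set of $F(x)=F(x_e)$ is exactly the graph $\{(\varphi(v),v)\}$. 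This graph is a submanifold of dimension $n-k$ passing through $x_e$.

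The decisive step is then the dimension count. In the relevant setting one has $n>k$ (for instance in the gyrostat application $n=3$ and $k=2$), so $n-k\ge 1$ and the graph is a curve, or a higher-dimensional set, through $x_e$. Consequently every neighborhood of $x_e$ contains solutions $(\varphi(v),v)\neq x_e$ of the system, and $x_e$ fails to be isolated; combined with Theorem~\ref{stability} this gives instability with respect to $\{F_1,\ldots,F_k\}$ and closes the contrapositive. I expect the main obstacle to be precisely this bookkeeping: the argument yields a contradiction only when $n-k>0$, so one must make explicit that the number of functionally independent conserved quantities is strictly smaller than the dimension of the phase space. The borderline case $n=k$ has to be set aside, since there $F$ would be a local diffeomorphism and $x_e$ would be genuinely isolated; everything outside this degenerate situation is a routine application of the implicit function theorem.
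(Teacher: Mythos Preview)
Your proposal is correct and follows precisely the route the paper indicates, namely the implicit function theorem combined with the isolation criterion (iii) of Theorem~\ref{stability}; the paper itself gives no detailed argument beyond this hint and simply refers to \cite{rouche}. Your observation that the hypothesis $k<n$ is implicitly required for the dimension count to force non-isolation is accurate and is a point the paper leaves tacit.
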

\medskip

For the case of one conserved quantity, i.e. $k=1$, we have the well known result.

\begin{thm}\label{stability-one-conserved}
Let $x_e$ be an equilibrium point of \eqref{ecuatie-diferentiala} and $F$ a conserved quantity.
The following statements are equivalent:
\begin{itemize}
\item[(i)] $x_e$ is stable with respect to the conserved quantity $F$;
\item[(ii)] $x_e$ is a strict local extremum of $F$.
\end{itemize}
\end{thm}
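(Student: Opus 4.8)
The plan is to prove the two implications separately, reducing the harder direction to the already-established Theorem \ref{stability}. Recall that, by the definition of stability with respect to a set of conserved quantities applied with $k=1$, statement (i) means precisely that there exists a continuous function $\Phi:\mathbb{R}\to\mathbb{R}$ such that $x\mapsto\Phi(F(x))-\Phi(F(x_e))$ is positive definite in $x_e$. To prove the easy implication (ii)$\Rightarrow$(i), I would simply exhibit such a $\Phi$. If $x_e$ is a strict local minimum of $F$, take $\Phi=\mathrm{id}$: then $\Phi(F(x))-\Phi(F(x_e))=F(x)-F(x_e)$ vanishes at $x_e$ and is strictly positive on a punctured neighborhood, hence positive definite. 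If $x_e$ is a strict local maximum, take $\Phi(s)=-s$ instead, which yields $F(x_e)-F(x)>0$ nearby. No condition on decrease along trajectories enters here, since $F$ being conserved already makes $\Phi\circ F$ constant along solutions.

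For the converse (i)$\Rightarrow$(ii), I would invoke Theorem \ref{stability} with $k=1$: statement (i) is equivalent to statement (iii), namely that $F(x)=F(x_e)$ has no root besides $x_e$ in some neighborhood of $x_e$. This can also be seen directly: if $F(x)=F(x_e)$ for some $x\neq x_e$ arbitrarily close to $x_e$, then $\Phi(F(x))-\Phi(F(x_e))=0$ at that point, contradicting positive definiteness. Thus from (i) I obtain an open ball $U=B(x_e,r)\subset D$ on which the continuous function $g:=F-F(x_e)$ vanishes \emph{only} at $x_e$.

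It then remains to upgrade ``$g\neq 0$ on $U\setminus\{x_e\}$'' to ``$g$ has constant sign on $U\setminus\{x_e\}$'', which is exactly the statement that $x_e$ is a strict local extremum of $F$. Here I would use that the punctured ball $U\setminus\{x_e\}$ is connected (the relevant setting being $\mathbb{R}^n$ with $n\geq 2$, as in the gyrostat application): since $g$ is continuous and nowhere zero on this connected set, the intermediate value property forces either $g>0$ throughout, giving a strict local minimum, or $g<0$ throughout, giving a strict local maximum. This topological step—ruling out a sign change of a nonvanishing continuous function—is the main, and essentially the only, obstacle, and it is precisely where connectedness of the punctured neighborhood is used. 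I would flag that in dimension one both the argument and the statement require extra care, but that the intended dynamical setting keeps $n\geq 2$, so the two implications together establish the equivalence.
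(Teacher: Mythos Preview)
The paper does not actually supply a proof of this theorem: it is introduced as ``the well known result'' and stated without argument, so there is no approach to compare yours against. Your proof is correct and self-contained. The implication (ii)$\Rightarrow$(i) via $\Phi=\pm\mathrm{id}$ is the standard one-line argument, and your reduction of (i)$\Rightarrow$(ii) to Theorem~\ref{stability}(iii) followed by the connectedness of the punctured ball in $\mathbb{R}^n$, $n\ge 2$, is clean and correct.

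Your caveat about $n=1$ is not merely a technicality in the argument but a genuine limitation of the statement itself: for the trivial flow $\dot{x}=0$ on $\mathbb{R}$ with $F(x)=x$ and $x_e=0$, the function $\Phi(s)=s^2$ makes $\Phi\circ F$ positive definite (so (i) holds, consistently with Theorem~\ref{stability}(iii)), yet $x_e$ is not a strict local extremum of $F$. So you are right to flag that the equivalence, as written, presumes $n\ge 2$; the paper leaves this implicit since the gyrostat application lives in $\mathbb{R}^3$.
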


\section{The stability of the uniform rotations of a torque-free gyrostat}

For the problem of torque-free gyrostat we find the set of uniform rotations and we study their stability  with respect to a conserved quantity. In the cases when the vector of the gyrostatic moment is situated along a principal axis of inertia of the gyrostat, we study the stability of an uniform rotation with respect to the set formed by two conserved quantities. Except two singular cases the Lyapunov stability problem for the free-torque gyrostat can be decided using the algebraic method with two conserved quantities and the linearization method. In the singular cases we decide the Lyapunov stability by studying the dynamics in an invariant set.

The equation for the rotation of a torque-free gyrostat is given by (see \cite{birtea-casu-comanescu},\cite{wittenburg})
\begin{equation}\label{torque-free-omega}
\mathbb{I}\dot{\vec{\omega}}=(\mathbb{I} \vec{\omega}+\vec{\mu})\times \vec{\omega},
\end{equation}
where $\vec{\omega}$ is the angular velocity, and $\mathbb{I}$ is the inertia tensor and $\vec{\mu}$ is the constant vector of gyrostatic moment. We denote by $I_1,I_2$ and $I_3$ the principal moments of inertia and suppose that $I_1>I_2>I_3$. If we use the angular momentum vector $\vec{M}=\mathbb{I} \vec{\omega}$ then the equation becomes
\begin{equation}\label{torque-free-rotations}
    \dot{\vec{M}}=(\vec{M}+\vec{\mu})\times \mathbb{I}^{-1}\vec{M}.
\end{equation}
It is easy to see that for the above dynamic we have two conserved quantities
$$F_1=\frac{1}{2}\vec{M}\cdot \mathbb{I}^{-1}\vec{M},\,\,\,F_2=\frac{1}{2}(\vec{M}+\vec{\mu})\cdot (\vec{M}+\vec{\mu}).$$
Next, we find the set of the uniform rotations.
In the paper \cite{puta-comanescu} was considered the differential equation
\begin{equation}\label{puta-comanescu}
    \dot{\vec{N}}=\vec{N}\times \mathbb{I}^{-1}\vec{N}+\vec{a}\times \vec{N},
\end{equation}
where $\vec{a}\in\mathbb{R}^3$. This equation is equivalent with the torque-free gyrostat equation \eqref{torque-free-rotations} where $\vec{a}=-\mathbb{I}^{-1}\vec{\mu}$ and making the change of variable $\vec{M}=\vec{N}-\vec{\mu}$.
According to \cite{puta-comanescu} the equilibrium points of \eqref{puta-comanescu} are of the following types:
\begin{itemize}
  \item [i.] $\vec{N}_1=(0,0,0)$;
  \item [ii.] $\vec{N}_{2}=(\frac{\mu_1}{1-\lambda I_1}, \frac{\mu_2}{1-\lambda I_2}, \frac{\mu_3}{1-\lambda I_3})$ for $\lambda\in \mathbb{R}\backslash \{\frac{1}{I_1},\frac{1}{I_2},\frac{1}{I_3}\}$;
  \item [iii.] $\vec{N}_{3}=(\alpha, \frac{\mu_2I_1}{I_1-I_2}, \frac{\mu_3I_1}{I_1-I_3})$ if $\mu_1=0$ and $\alpha\in \mathbb{R}$;
  \item [iv.] $\vec{N}_{4}=(\frac{\mu_1I_2}{I_2-I_1}, \alpha, \frac{\mu_3I_2}{I_2-I_3})$ if $\mu_2=0$ and $\alpha\in \mathbb{R}$;
  \item [v.] $\vec{N}_{5}=(\frac{\mu_1I_3}{I_3-I_1}, \frac{\mu_2I_3}{I_3-I_2}, \alpha)$ if $\mu_3=0$ and $\alpha\in \mathbb{R}$.
\end{itemize}
Consequently, the uniform rotations of the torque-free equation \eqref{torque-free-rotations} are of the types:
\begin{itemize}
  \item [i.] $\vec{M}_1=(-\mu_1,-\mu_2,-\mu_3)$;
  \item [ii.] $\vec{M}_{2}=(\frac{\lambda I_1}{1-\lambda I_1}\mu_1, \frac{\lambda I_2}{1-\lambda I_2}\mu_2, \frac{\lambda I_3}{1-\lambda I_3}\mu_3)$ for $\lambda\in \mathbb{R}\backslash \{\frac{1}{I_1},\frac{1}{I_2},\frac{1}{I_3}\}$;
  \item [iii.] $\vec{M}_{3}=(\beta, \frac{I_2}{I_1-I_2}\mu_2, \frac{I_3}{I_1-I_3}\mu_3)$ if $\mu_1=0$ and $\beta\in \mathbb{R}$;
  \item [iv.] $\vec{M}_{4}=(\frac{I_1}{I_2-I_1}\mu_1, \beta, \frac{I_3}{I_2-I_3}\mu_3)$ if $\mu_2=0$ and $\beta\in \mathbb{R}$;
  \item [v.] $\vec{M}_{5}=(\frac{I_1}{I_3-I_1}\mu_1, \frac{I_2}{I_3-I_2}\mu_2, \beta)$ if $\mu_3=0$ and $\beta\in \mathbb{R}$.
\end{itemize}
Analogous considerations are made in \cite{wittenburg}, pp. 78-80, for finding the uniform rotations of the system \eqref{torque-free-omega}.
\medskip

First we study the stability of an uniform rotation with respect to one conserved quantity.

\begin{thm}\label{stability-one-conserved-quantity}
For the uniform rotations of a torque-free gyrostat we have:
\begin{itemize}
\item [(i)] The unique uniform rotation which is stable with respect to $F_1$ is $(0,0,0)$. This uniform rotation is of type $\vec{M}_2$ obtained for $\lambda=0$.
\item [(ii)] The unique uniform rotation which is stable with respect to $F_2$ is $\vec{M}_1=(-\mu_1,-\mu_2,-\mu_3)$.
\end{itemize}
\end{thm}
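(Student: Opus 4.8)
The plan is to reduce everything to Theorem~\ref{stability-one-conserved}, which asserts that stability with respect to a single conserved quantity $F$ is equivalent to $x_e$ being a strict local extremum of $F$. It then suffices to determine, separately for $F_1$ and for $F_2$, which of the uniform rotations are strict local extrema. The key structural observation I would exploit is that both conserved quantities are globally positive definite quadratic forms, so the extremum analysis can be performed over all of $\mathbb{R}^3$ at once rather than family by family.

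For part (i), I would note that $F_1=\frac{1}{2}\vec{M}\cdot\mathbb{I}^{-1}\vec{M}$ is a quadratic form with Hessian $\mathbb{I}^{-1}$. Since $I_1,I_2,I_3>0$ the matrix $\mathbb{I}^{-1}$ is symmetric positive definite, so $F_1$ is positive definite and its gradient $\nabla F_1(\vec{M})=\mathbb{I}^{-1}\vec{M}$ vanishes only at $\vec{M}=(0,0,0)$. Hence the origin is the unique critical point of $F_1$ on $\mathbb{R}^3$, and there $F_1$ attains a strict global, hence strict local, minimum; no other point of $\mathbb{R}^3$ can be a local extremum. Consequently the only uniform rotation that is a strict local extremum of $F_1$ is $(0,0,0)$, and to close part (i) I would simply record that this point is the rotation of type $\vec{M}_2$ obtained at $\lambda=0$.

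For part (ii) the argument runs in parallel after the translation $\vec{M}\mapsto\vec{M}+\vec{\mu}$. The quantity $F_2=\frac{1}{2}(\vec{M}+\vec{\mu})\cdot(\vec{M}+\vec{\mu})$ is a positive definite quadratic form in the shifted variable, with gradient $\nabla F_2(\vec{M})=\vec{M}+\vec{\mu}$ and Hessian equal to the identity. This gradient vanishes only at $\vec{M}=-\vec{\mu}=(-\mu_1,-\mu_2,-\mu_3)$, which is therefore the unique critical point and a strict global minimum of $F_2$. Thus the only uniform rotation stable with respect to $F_2$ is $(-\mu_1,-\mu_2,-\mu_3)$, and this is exactly the rotation $\vec{M}_1$.

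There is no real obstacle here: once Theorem~\ref{stability-one-conserved} is invoked, the positive definiteness of the two quadratic forms makes the location and uniqueness of the extrema immediate, and the fact that each form has a single critical point over all of $\mathbb{R}^3$ spares me from inspecting the five families $\vec{M}_1,\dots,\vec{M}_5$ individually. The only genuine verification, and the closest thing to a subtle point, is confirming that the unique extremum of each $F_i$ actually appears in the enumerated list of uniform rotations, namely $(0,0,0)$ as $\vec{M}_2$ with $\lambda=0$ and $(-\mu_1,-\mu_2,-\mu_3)$ as $\vec{M}_1$; both are read off directly from the explicit parametrizations given above.
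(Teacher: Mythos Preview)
Your proposal is correct and follows essentially the same approach as the paper: both reduce via Theorem~\ref{stability-one-conserved} to identifying the strict local extrema of $F_1$ and $F_2$, and then observe that each is a positive definite quadratic form with a unique (global) minimum. You simply spell out the gradient/Hessian computation that the paper leaves implicit.
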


\begin{proof}
$(i)$ The uniform rotation $(0,0,0)$ is the unique strict local extremum of the conserved quantity $F_1$. Using the Theorem \ref{stability-one-conserved} we obtain the result.

\noindent $(ii)$ The uniform rotation $(-\mu_1,-\mu_2,-\mu_3)$ is the unique strict local extremum of the conserved quantity $F_2$. Using the Theorem \ref{stability-one-conserved} we obtain the enounced result.
\end{proof}
The uniform rotations found in the above theorem are the only uniform rotations which Lyapunov stability can be proved by using only one of the conserved quantities. For the rest of the uniform rotations it is necessary to consider both conserved quantities.
Next, we study the stability of the uniform rotations with respect to the set conserved quantities $\{F_1,F_2\}$.
By direct calculus we obtain that the set of uniform rotations coincide with the set where the jacobian matrix $\frac{\partial(F_1,F_2)}{\partial \vec{M}}$ has the rank strictly smaller $2$ and consequently, the necessary condition of Theorem \ref{rank-jacobian} is fulfilled.

In what follows we restrict ourselves to the cases for which the vector of gyrostatic moment $\vec{\mu}$ is situated along a principal axis of inertia of the gyrostat.

\subsection{The case $\mu_2=\mu_3=0$}
In this case we have the following types of uniform rotations:
\begin{itemize}
  \item [i.] $\vec{M}_{1-2}=(q,0,0)$ where $q\in \mathbb{R}$;
  \item [ii.] $\vec{M}_{4}=(\frac{I_1}{I_2-I_1}\mu_1, q,0)$ where $q\in \mathbb{R}^*$;
  \item [iii.] $\vec{M}_{5}=(\frac{I_1}{I_3-I_1}\mu_1, 0, q)$ where $q\in \mathbb{R}^*$.
\end{itemize}
First, we study the solutions of the algebraic system
$$F_1(\vec{M})=F_1(\vec{M}_e),\,\,\,F_2(\vec{M})=F_2(\vec{M}_e),$$
where $\vec{M}_e$ is an uniform rotation. The above system of algebraic equations has the form:
\begin{equation}\label{first-axis}
    \left\{
      \begin{array}{ll}
        \frac{M_2^2}{I_2}+\frac{M_3^2}{I_3}=\frac{M_{1e}^2}{I_1}+\frac{M_{2e}^2}{I_2}+\frac{M_{3e}^2}{I_3}-\frac{M_{1}^2}{I_1} \\
        M_2^2+M_3^2=M_{1e}^2+M_{2e}^2+M_{3e}^2-M_1^2-2\mu_1M_1
      \end{array}
    \right.
\end{equation}
 where the unknowns are $M_1,M_2$ and $M_3$. The system has at least the solution $(M_{1e},M_{2e},M_{3e})$. We want to see if this solution is isolated in the set of all the solutions of the algebraic system.
For our study is preferable to change the variable $M_1$ with $x=M_1-M_{1e}$. The algebraic system \eqref{first-axis} becomes
\begin{equation}\label{first-axis-changed}
    \left\{
      \begin{array}{ll}
        \frac{M_2^2}{I_2}+\frac{M_3^2}{I_3}=\frac{M_{2e}^2}{I_2}+\frac{M_{3e}^2}{I_3}-\frac{x^2+2xM_{1e}}{I_1} \\
        M_2^2+M_3^2=M_{2e}^2+M_{3e}^2-(x^2+2xM_{1e})-2\mu_1(x+M_{1e})
      \end{array}
    \right.
\end{equation}
with the unknowns $x,M_2$ and $M_3$. The system has at least the solution $(0,M_{2e},M_{3e})$.
The solution $(M_{1e},M_{2e},M_{3e})$ of \eqref{first-axis} is isolated in the set of all the solutions of this system if and only if the solution $(0,M_{2e},M_{3e})$ of \eqref{first-axis-changed} is isolated in the set of corresponding solutions.
If we use the unknowns $M_2^2$ and $M_3^2$, then we have a linear system. Using Cramer's rule we can find the solutions of \eqref{first-axis-changed}.
\medskip

\noindent {\bf I. The uniform rotation of type $\vec{M}_{1-2}$.}
The solutions of \eqref{first-axis-changed} verifies
\begin{equation}\label{first-axis-rezolvat}
\left\{
    \begin{array}{ll}
      M_2^2=-\frac{2xI_2}{I_1(I_2-I_3)}\left(\frac{1}{2}x(I_1-I_3)+q(I_1-I_3)+I_1\mu_1\right) \\
      M_3^2=\frac{2xI_3}{I_1(I_2-I_3)}\left (\frac{1}{2}x(I_1-I_2)+q(I_1-I_2)+I_1\mu_1\right )
    \end{array}
  \right.
\end{equation}
\medskip

\noindent {\bf I.1.} If $q=-\frac{I_1\mu_1}{I_1-I_3}$, then
the system \eqref{first-axis-rezolvat} becomes
\begin{equation}\label{first-axis-rezolvat-1}
\left\{
    \begin{array}{ll}
      M_2^2=-\frac{x^2I_2(I_1-I_3)}{I_1(I_2-I_3)} \\
      M_3^2=\frac{2xI_3}{I_1(I_2-I_3)}\left (\frac{1}{2}x(I_1-I_2)+\frac{I_1(I_2-I_3)}{I_1-I_3}\mu_1\right )
    \end{array}
  \right.
\end{equation}
By our hypotheses we have $I_1>I_2>I_3$ and if $(x,M_2,M_3)$ is a solution of \eqref{first-axis-rezolvat-1}, then $M_2^2\leq 0$. We deduce that $(0,0,0)$ is the unique solution of the above system and consequently, it is isolated in the set of all the solutions.
\medskip

\noindent {\bf I.2.} If $q=-\frac{I_1\mu_1}{I_1-I_2}$, then
the system \eqref{first-axis-rezolvat} becomes
\begin{equation}\label{first-axis-rezolvat-2}
\left\{
    \begin{array}{ll}
      M_2^2=-\frac{2xI_2}{I_1(I_2-I_3)}\left (\frac{1}{2}x(I_1-I_3)+\frac{I_1(I_2-I_3)}{I_1-I_2}\mu_1\right ) \\
      M_3^2=\frac{x^2I_3(I_1-I_2)}{I_1(I_2-I_3)}
    \end{array}
  \right.
\end{equation}
For $|x|$ sufficiently small we have
$$\text{sgn} \left(-\frac{2xI_2}{I_1(I_2-I_3)}\left (\frac{1}{2}x(I_1-I_3)+\frac{I_1(I_2-I_3)}{I_1-I_2}\mu_1\right )  \right )=
-\text{sgn}(x)\cdot\text{sgn}(\mu_1).$$
For every $|x|$ sufficiently small such that $\text{sgn}(x)=-\text{sgn}{\mu_1}$ we obtain a solution of \eqref{first-axis-rezolvat-2} and consequently, we have that $(0,0,0)$ is not an isolated solution in the set of all the solutions.
\medskip

\noindent {\bf I.3.} The case when $q\neq -\frac{I_1\mu_1}{I_1-I_2}$ and $q\neq -\frac{I_1\mu_1}{I_1-I_3}$.
For $|x|$ sufficiently small the terms in the righthand side of the system \eqref{first-axis-rezolvat} have the properties
 $$\text{sgn}\left(-\frac{2xI_2}{I_1(I_2-I_3)}\left(\frac{1}{2}x(I_1-I_3)+q(I_1-I_3)+I_1\mu_1\right)\right)=-\text{sgn}(x)\cdot\text{sgn}(q(I_1-I_3)+I_1\mu_1),$$
$$\text{sgn}\left( \frac{2xI_3}{I_1(I_2-I_3)}\left (\frac{1}{2}x(I_1-I_2)+q(I_1-I_2)+I_1\mu_1\right )\right )=\text{sgn}(x)\cdot\text{sgn}(q(I_1-I_2)+I_1\mu_1).$$
If $\text{sgn}(q(I_1-I_3)+I_1\mu_1)\cdot\text{sgn}(q(I_1-I_2)+I_1\mu_1)>0$, then exists $r>0$ such that a solution of the form $(x,M_2,M_3)$ which verify $x\neq 0$ has the property $|x|>r$. In this case, the solution $(0,0,0)$ of the system \eqref{first-axis-rezolvat} is an isolated solution in the set of all the solutions.

If $\text{sgn}(q(I_1-I_3)+I_1\mu_1)\cdot\text{sgn}(q(I_1-I_2)+I_1\mu_1)<0$, then for every $|x|$ sufficiently small we have that the solutions of the system \eqref{first-axis-rezolvat} are of the form $(x,M_2,M_3)$. We obtain that the solution $(0,0,0)$ of the system \eqref{first-axis-rezolvat} is not isolated in the set of all the solutions.
\medskip

\noindent {\bf II. The uniform rotation of type $\vec{M}_4$.}
In this case the system \eqref{first-axis-changed} is equivalent with the following system
\begin{equation}\label{first-axis-rezolvat-M4}
\left\{
    \begin{array}{ll}
      M_2^2=q^2-\frac{I_2(I_1-I_3)}{I_1(I_2-I_3)}x^2+\frac{2I_2\mu_1}{I_1-I_2}x \\
      M_3^2=\frac{x^2I_3(I_1-I_2)}{I_1(I_2-I_3)}
    \end{array}
  \right.
\end{equation}
As before, for every $|x|$ sufficiently small we have a solution of the above system which is of the form $(x,M_2,M_3)$ and consequently, $(0,q,0)$ is not isolated in the set of all the solutions of \eqref{first-axis-rezolvat-M4}.
\medskip

\noindent {\bf III. The uniform rotation of type $\vec{M}_5$.}
The system \eqref{first-axis-changed} is equivalent with the following system
\begin{equation}\label{first-axis-rezolvat-M5}
\left\{
    \begin{array}{ll}
M_2^2=-\frac{x^2I_2(I_1-I_3)}{I_1(I_2-I_3)} \\
      M_3^2=q^2+\frac{I_3(I_1-I_2)}{I_1(I_2-I_3)}x^2+\frac{2I_3\mu_1}{I_1-I_3}x
    \end{array}
  \right.
\end{equation}
The solutions of the above system are $(0,0,q)$ and $(0,0,-q)$. The solution $(0,0,q)$ is isolated in the set of all the solutions of \eqref{first-axis-rezolvat-M5}.
\medskip

Summarizing, we obtain the following result.

\begin{thm} For a vector of gyrostatic moment along the first axis of inertia we have the following stability results.
\begin{itemize}
\item [(i)] An uniform rotation of type $\vec{M}_{1-2}=(q,0,0)$ is stable with respect to the set of conserved quantities $\{F_1,F_2\}$ if and only if $q\in (-\infty, -\frac{I_1\mu_1}{I_1-I_2})\cup [-\frac{I_1\mu_1}{I_1-I_3},\infty)$ and $\mu_1>0$ or $q\in (-\infty, -\frac{I_1\mu_1}{I_1-I_3})\cup [-\frac{I_1\mu_1}{I_1-I_2},\infty)$ and $\mu_1<0$.
\item [(ii)] An uniform rotation of type $\vec{M}_4=(\frac{I_1}{I_2-I_1}\mu_1, q,0)$ with $q\neq 0$ is not stable with respect to the set of conserved quantities $\{F_1,F_2\}$.
\item [(iii)] An uniform rotation of type $\vec{M}_5=(\frac{I_1}{I_3-I_1}\mu_1, 0, q)$ with $q\neq 0$ is stable with respect to the set of conserved quantities $\{F_1,F_2\}$.

\end{itemize}
\end{thm}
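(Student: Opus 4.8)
The plan is to invoke Theorem~\ref{stability}, which reduces stability with respect to $\{F_1,F_2\}$ to the purely algebraic question of whether the equilibrium is isolated among the solutions of the level-set system $F_1(\vec M)=F_1(\vec M_e),\ F_2(\vec M)=F_2(\vec M_e)$. All three parts of the statement then follow by reading off isolation from the reduced systems already obtained above: for each type of uniform rotation the level-set system has been solved for $M_2^2$ and $M_3^2$ as explicit functions of the shifted coordinate $x=M_1-M_{1e}$, so it remains only to decide, for $|x|$ small, whether nonzero values of $x$ can produce genuine solutions, that is, solutions with $M_2^2\ge 0$ and $M_3^2\ge 0$.

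For part (i), the uniform rotation $\vec M_{1-2}=(q,0,0)$, I would treat the three subcases governed by \eqref{first-axis-rezolvat}. In subcase I.1 the right-hand side of the equation for $M_2^2$ in \eqref{first-axis-rezolvat-1} is $\le 0$ for all $x$, so $M_2^2\ge 0$ forces $x=M_2=0$ and then $M_3=0$; the equilibrium is isolated and hence stable. In subcase I.2 the equation for $M_2^2$ in \eqref{first-axis-rezolvat-2} has sign $-\mathrm{sgn}(x)\,\mathrm{sgn}(\mu_1)$ for small $|x|$ while $M_3^2\ge 0$ holds automatically, so a one-sided family of small $x\neq 0$ yields admissible solutions, the equilibrium is not isolated, and it is therefore unstable. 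In the generic subcase I.3 the two right-hand sides in \eqref{first-axis-rezolvat} have signs $-\mathrm{sgn}(x)\,\mathrm{sgn}(q(I_1-I_3)+I_1\mu_1)$ and $\mathrm{sgn}(x)\,\mathrm{sgn}(q(I_1-I_2)+I_1\mu_1)$, so both $M_2^2$ and $M_3^2$ can be made nonnegative for small $x\neq 0$ exactly when $\mathrm{sgn}(q(I_1-I_3)+I_1\mu_1)\,\mathrm{sgn}(q(I_1-I_2)+I_1\mu_1)<0$; in that event the equilibrium fails to be isolated, and otherwise it is isolated.

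The main bookkeeping obstacle, and the only real content of part (i), is to convert this sign-product criterion into the explicit $q$-intervals in the statement. Here I would use that $q\mapsto q(I_1-I_3)+I_1\mu_1$ and $q\mapsto q(I_1-I_2)+I_1\mu_1$ are both strictly increasing in $q$ (because $I_1>I_2>I_3$), with roots $-\tfrac{I_1\mu_1}{I_1-I_3}$ and $-\tfrac{I_1\mu_1}{I_1-I_2}$ whose order depends on $\mathrm{sgn}(\mu_1)$: for $\mu_1>0$ one has $-\tfrac{I_1\mu_1}{I_1-I_2}<-\tfrac{I_1\mu_1}{I_1-I_3}<0$, while for $\mu_1<0$ the inequality reverses. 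Reading off where the two linear factors share a sign then yields precisely the stated unions of a left ray and a right ray, and the endpoint analysis of I.1 (stable, closed endpoint) versus I.2 (unstable, open endpoint) fixes which endpoints are included. This is exactly the reason the statement must split into the cases $\mu_1>0$ and $\mu_1<0$.

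Finally, parts (ii) and (iii) follow immediately from Theorem~\ref{stability} applied to the reduced systems \eqref{first-axis-rezolvat-M4} and \eqref{first-axis-rezolvat-M5}. For $\vec M_4$, the equation for $M_2^2$ in \eqref{first-axis-rezolvat-M4} stays positive near $x=0$ (its value at $x=0$ is $q^2>0$) while $M_3^2\ge 0$ holds automatically, so small nonzero $x$ always yields solutions and the equilibrium is not isolated, hence unstable. For $\vec M_5$, the equation for $M_2^2$ in \eqref{first-axis-rezolvat-M5} is $\le 0$, forcing $x=M_2=0$ and then $M_3=\pm q$; the solution set is the pair of isolated points $(0,0,q)$ and $(0,0,-q)$, so $\vec M_5$ is isolated in its own solution set and is therefore stable.
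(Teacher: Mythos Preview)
Your proposal is correct and follows essentially the same route as the paper: the paper derives the reduced systems \eqref{first-axis-rezolvat}, \eqref{first-axis-rezolvat-M4}, \eqref{first-axis-rezolvat-M5}, carries out the subcase analysis I.1--I.3 exactly as you describe, and then states the theorem as a summary of those calculations. Your contribution is merely to make explicit the bookkeeping step---ordering the roots $-\tfrac{I_1\mu_1}{I_1-I_2}$ and $-\tfrac{I_1\mu_1}{I_1-I_3}$ according to $\mathrm{sgn}(\mu_1)$ and reading off where the two linear factors share a sign---which the paper leaves to the reader under the phrase ``Summarizing, we obtain the following result.''
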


We prove that the uniform rotations which are not stable with respect to the set $\{F_1,F_2\}$ are Lyapunov unstable.
In the paper \cite{puta-comanescu} is proved that the equilibrium points of the system \eqref{puta-comanescu} have the properties:
\begin{itemize}
\item [(i)] for $\lambda\in (\frac{1}{I_2}, \frac{1}{I_3})$ an equilibrium point of type $\vec{N}_2$ is spectrally unstable;
\item [(ii)] an equilibrium point of type $\vec{N}_4$ is spectrally unstable when $\alpha\neq 0$.
\end{itemize}
Consequently, we have:
\begin{itemize}
\item [(i)] for $q\in (-\frac{I_1\mu_1}{I_1-I_2}, -\frac{I_1\mu_1}{I_1-I_3})$ and $\mu_1>0$ or for $q\in (-\frac{I_1\mu_1}{I_1-I_3}, -\frac{I_1\mu_1}{I_1-I_2})$ and $\mu_1<0$ an uniform rotation of type $\vec{M}_{1-2}$ is spectrally unstable and consequently, it is unstable in the sense of Lyapunov;
\item [(ii)] an uniform rotation of type $\vec{M}_4$ is spectrally unstable and also it is unstable in the sense of Lyapunov.
\end{itemize}

The Lyapunov stability or instability of $\vec{M}_e=(-\frac{I_1\mu_1}{I_1-I_2},0,0)$ cannot be decided using the set of conserved quantities $\{F_1,F_2\}$ or using the linearization method. This uniform rotation is spectrally stable and it is not stable  with respect to the set of of conserved quantities $\{F_1,F_2\}$. The instability in the sense of Lyapunov of this uniform rotation will be proved by studying the dynamics on the invariant set $$\mathcal{M}=\{\vec{M}\,|\,F_1(\vec{M})=F_1(\vec{M_e}),\,\,F_2(\vec{M})=F_2(\vec{M_e})\}.$$

\begin{thm}\label{stabilitate-singular-23}
The uniform rotation $(-\frac{I_1\mu_1}{I_1-I_2},0,0)$ is unstable in the sense of Lyapunov.
\end{thm}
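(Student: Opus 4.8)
The plan is to exploit the invariant set $\mathcal{M}$ together with the special form of the first component of the vector field. Writing $x=M_1-M_{1e}$ as in \eqref{first-axis-changed}, recall from case \textbf{I.2} that near $\vec{M}_e=(-\frac{I_1\mu_1}{I_1-I_2},0,0)$ the points of $\mathcal{M}$ satisfy the relations \eqref{first-axis-rezolvat-2}; in particular $M_3^2=\frac{I_3(I_1-I_2)}{I_1(I_2-I_3)}x^2$ holds exactly (so $M_3\neq 0$ whenever $x\neq 0$), while $M_2^2=\frac{2I_2\mu_1}{I_1-I_2}\,|x|+O(x^2)$ for small $|x|$, admissible only when $\text{sgn}(x)=-\text{sgn}(\mu_1)$. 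Thus near $\vec{M}_e$ the set $\mathcal{M}$ is a union of branches, realizing all four sign choices of $(M_2,M_3)$, each emanating from $\vec{M}_e$ and receding from it as $|x|$ grows.

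First I would compute the first component of \eqref{torque-free-rotations} for $\vec{\mu}=(\mu_1,0,0)$, obtaining the exact identity
\[
\dot{M}_1=\frac{I_2-I_3}{I_2I_3}\,M_2M_3,
\]
whose coefficient is strictly positive since $I_2>I_3$. Because $\mathcal{M}$ is invariant, any solution issued from a point of $\mathcal{M}$ remains on $\mathcal{M}$, so this single identity governs the evolution of $x$ along such a solution.

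The heart of the argument is then a sign analysis. Assuming $\mu_1>0$ (the case $\mu_1<0$ is symmetric), one has $x<0$ on $\mathcal{M}$, and I would select the branch on which $M_2>0$ and $M_3<0$, so that $M_2M_3<0$ and hence $\dot{M}_1<0$: along this branch $x$ strictly decreases, i.e. the solution moves away from $\vec{M}_e$. The decisive observation is that this sign configuration is self-sustaining: as $x$ decreases, both $M_2^2$ and $M_3^2$ increase, so neither coordinate can vanish away from $x=0$, the signs of $M_2$ and $M_3$ are preserved, and $M_2M_3$ stays strictly negative. Fixing a small $\varepsilon_0>0$ inside the region where \eqref{first-axis-rezolvat-2} is valid, and taking an initial value $x_0\in(-\varepsilon_0,0)$ arbitrarily close to $0$ (hence an initial point arbitrarily close to $\vec{M}_e$), the continuous function $x\mapsto\dot{M}_1$ is strictly negative, hence bounded above by a negative constant on the compact interval $[-\varepsilon_0,x_0]$; therefore $x(t)$ reaches $-\varepsilon_0$ in finite time and the solution leaves $B(\vec{M}_e,\varepsilon_0)$. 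This is precisely the failure of Lyapunov stability.

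I expect the main obstacle to lie in the rigorous bookkeeping of the branch structure of $\mathcal{M}$: one must confirm that the branch carrying the escaping sign configuration exists for all small $|x|$, that a solution cannot migrate to another branch without crossing $M_2=0$ or $M_3=0$ (which happens only at $\vec{M}_e$), and that the solution stays within the validity domain of \eqref{first-axis-rezolvat-2} long enough to exit a fixed neighborhood. The most delicate point is making precise that decreasing $x$ forces escape even though $\dot{M}_1\to 0$ as $x\to 0$; this requires controlling the lower-order terms in $M_2^2$ on a small fixed interval so that the qualitative picture — strict monotonicity of $x$ away from the equilibrium — survives.
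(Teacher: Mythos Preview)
Your proposal is correct and follows essentially the same approach as the paper: restrict to the invariant set $\mathcal{M}$, use the first component $\dot{M}_1=\frac{I_2-I_3}{I_2I_3}M_2M_3$, and show via the relations \eqref{first-axis-rezolvat-2} that on a suitably chosen branch $x$ moves monotonically away from $0$. The paper packages the branch bookkeeping by squaring to obtain $\dot{x}^2$ as an explicit function of $x$ alone and then asserts that $x$ reaches the far zero of that expression, whereas your compactness argument (reaching a fixed $-\varepsilon_0$ in finite time) is a cleaner way to conclude instability; just note that your claim ``$M_2^2$ increases as $x$ decreases'' is only valid for small $|x|$, which is consistent with your choice of small $\varepsilon_0$.
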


\begin{proof}
The projection of the vectorial differential equation \eqref{torque-free-rotations} on the first axis using the variables $x,M_2$ and $M_3$ is
$$\dot{x}=(\frac{1}{I_3}-\frac{1}{I_2})M_2M_3.$$
By using \eqref{first-axis-rezolvat-2} we have
$$\dot{x}^2=-\frac{2x^3(I_1-I_2)}{I_1^2(I_2-I_3)^2}\left (\frac{1}{2}x(I_1-I_3)+\frac{I_1(I_2-I_3)}{I_1-I_2}\mu_1\right ).$$
First we consider the case $\mu_1>0$. Suppose that we have $x(0)>-\frac{2I_1(I_2-I_3)}{(I_1-I_2)(I_1-I_3)}\mu_1$ and $M_2(0)M_3(0)<0$. Consequently, we obtain that $\dot{x}(0)<0$. In this case there exists $t^*>0$ such that $x(t^*)=-\frac{2I_1(I_2-I_3)}{(I_1-I_2)(I_1-I_3)}\mu_1$ which implies that our uniform rotation is unstable.
In the case $\mu_1<0$ we have analogous considerations.
\end{proof}

\begin{rem}
In this case an uniform rotation is stable with respect to the set of conserved quantities $\{F_1,F_2\}$ if and only if it is stable in the sense of Lyapunov.
\end{rem}
\subsection{The case $\mu_1=\mu_3=0$}

We have the following types of uniform rotations:
\begin{itemize}
  \item [i.] $\vec{M}_{1-2}=(0,q,0)$ where $q\in \mathbb{R}$;
  \item [ii.] $\vec{M}_{3}=(q,\frac{I_2}{I_1-I_2}\mu_2, 0)$ where $q\in \mathbb{R}^*$;
  \item [iii.] $\vec{M}_{5}=(0,\frac{I_2}{I_3-I_2}\mu_2, q)$ where $q\in \mathbb{R}^*$.
\end{itemize}
Using the method of the previous section and by analogous calculations we obtain the following result.
\begin{thm} For a vector of gyrostatic moment along the second axis of inertia we have the following stability results.
\begin{itemize}
\item [(i)] An uniform rotation of type $\vec{M}_{1-2}=(0,q,0)$ is stable with respect to the set of conserved quantities $\{F_1,F_2\}$ if and only if $q\in [-\frac{I_2\mu_2}{I_2-I_3},\frac{I_2\mu_2}{I_1-I_2}]$ and $\mu_2>0$ or $q\in [\frac{I_2\mu_2}{I_1-I_2},-\frac{I_2\mu_2}{I_2-I_3}]$ and $\mu_2<0$.
\item [(ii)] An uniform rotation of type $\vec{M}_3=(q,\frac{I_2}{I_1-I_2}\mu_2, 0)$ with $q\neq 0$ is stable with respect to the set of conserved quantities $\{F_1,F_2\}$.
\item [(iii)] An uniform rotation of type $\vec{M}_5=(0,\frac{I_2}{I_3-I_2}\mu_2, q)$ with $q\neq 0$ is stable with respect to the set of conserved quantities $\{F_1,F_2\}$.

\end{itemize}
\end{thm}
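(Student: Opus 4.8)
The plan is to prove all three claims by the algebraic criterion of Theorem~\ref{stability}~(iii): an uniform rotation $\vec{M}_e$ is stable with respect to $\{F_1,F_2\}$ if and only if it is isolated in the solution set of the system $F_1(\vec{M})=F_1(\vec{M}_e)$, $F_2(\vec{M})=F_2(\vec{M}_e)$. Since the gyrostatic moment now points along the second axis ($\mu_1=\mu_3=0$), the distinguished variable is $M_2$, so I would introduce $y=M_2-M_{2e}$ — the exact analogue of the substitution $x=M_1-M_{1e}$ used in the case $\mu_2=\mu_3=0$ — and decide whether the solution $y=0$ is isolated. (The necessary rank condition of Theorem~\ref{rank-jacobian} is already known to hold at every uniform rotation, so only isolation is at stake.)

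After the substitution the two equations become linear in the unknowns $M_1^2$ and $M_3^2$, with the same nonsingular coefficient matrix as in the previous subsection (determinant $\frac{I_3-I_1}{I_1I_3}\neq0$). Cramer's rule then yields $M_1^2$ and $M_3^2$ as explicit quadratics in $y$, with $P:=y^2+2yM_{2e}$:
\begin{align*}
M_1^2&=M_{1e}^2-\frac{I_1(I_2-I_3)}{I_2(I_1-I_3)}\,P-\frac{2I_1\mu_2}{I_1-I_3}\,y,\\
M_3^2&=M_{3e}^2-\frac{I_3(I_1-I_2)}{I_2(I_1-I_3)}\,P+\frac{2I_3\mu_2}{I_1-I_3}\,y.
\end{align*}
Isolation of $y=0$ is then equivalent to the impossibility of choosing arbitrarily small $y\neq0$ making both right-hand sides nonnegative, which is read off from the signs of the leading coefficients for small $|y|$, exactly as in cases I.1--I.3 of the previous subsection.

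For the types $\vec{M}_3$ and $\vec{M}_5$ I expect the essential mechanism to be a cancellation: substituting $M_{2e}=\frac{I_2\mu_2}{I_1-I_2}$ (with $M_{3e}=0$) makes the linear-in-$y$ term of $M_3^2$ vanish, leaving $M_3^2=-\frac{I_3(I_1-I_2)}{I_2(I_1-I_3)}y^2\le0$; likewise substituting $M_{2e}=\frac{I_2\mu_2}{I_3-I_2}$ (with $M_{1e}=0$) kills the linear term of $M_1^2$, leaving $M_1^2=-\frac{I_1(I_2-I_3)}{I_2(I_1-I_3)}y^2\le0$. In each case one of the two squares is nonnegative only at $y=0$, so every nearby solution collapses to the finitely many points with $y=0$; these are isolated, giving stability in (ii) and (iii). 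For the type $\vec{M}_{1-2}=(0,q,0)$ one has $M_{1e}=M_{3e}=0$, and the leading linear coefficients reduce to $A=(I_2-I_3)q+I_2\mu_2$ and $B=I_2\mu_2-(I_1-I_2)q$; since $\operatorname{sgn}(M_1^2)=-\operatorname{sgn}(y)\operatorname{sgn}(A)$ and $\operatorname{sgn}(M_3^2)=\operatorname{sgn}(y)\operatorname{sgn}(B)$ for small $|y|$, the point is isolated precisely when $A$ and $B$ have the same sign. Solving $\operatorname{sgn}(A)\operatorname{sgn}(B)>0$ (distinguishing $\mu_2>0$ from $\mu_2<0$) produces exactly the open versions of the intervals in claim~(i).

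The step I expect to be the main obstacle is the careful treatment of the endpoints $q=-\frac{I_2\mu_2}{I_2-I_3}$ (where $A=0$) and $q=\frac{I_2\mu_2}{I_1-I_2}$ (where $B=0$), where the leading linear coefficient drops out and the naive sign reasoning fails. I would check that at $A=0$ the expression for $M_1^2$ collapses to $-\frac{I_1(I_2-I_3)}{I_2(I_1-I_3)}y^2\le0$, and at $B=0$ the expression for $M_3^2$ collapses to $-\frac{I_3(I_1-I_2)}{I_2(I_1-I_3)}y^2\le0$; in both cases the relevant square is nonpositive, so the endpoint is still isolated and hence stable. This is precisely what upgrades the open intervals coming from the interior analysis to the closed intervals asserted in~(i), and it is the one place where the leading-order argument must be refined to next order.
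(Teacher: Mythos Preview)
Your proposal is correct and follows exactly the approach the paper intends: the paper's own ``proof'' is the single sentence ``Using the method of the previous section and by analogous calculations we obtain the following result,'' and your write-up carries out precisely those analogous calculations (the substitution $y=M_2-M_{2e}$, Cramer's rule for $M_1^2,M_3^2$, the sign analysis, and the endpoint check). In particular you correctly identify that here \emph{both} quadratic coefficients $-\frac{I_1(I_2-I_3)}{I_2(I_1-I_3)}$ and $-\frac{I_3(I_1-I_2)}{I_2(I_1-I_3)}$ are negative, which is why both endpoints of the interval in (i) are included --- in contrast to the first-axis case, where only one endpoint survives.
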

In the paper \cite{puta-comanescu} is proved that an equilibrium point of type $\vec{N}_2$ is spectrally unstable for $\lambda\in (\frac{1}{I_1}, \frac{1}{I_3})$. Consequently, for $q\in \mathbb{R}\backslash [-\frac{I_2\mu_2}{I_2-I_3},\frac{I_2\mu_2}{I_1-I_2}]$ and $\mu_2>0$ or $q\in \mathbb{R}\backslash [\frac{I_2\mu_2}{I_1-I_2},-\frac{I_2\mu_2}{I_2-I_3}]$ and $\mu_2<0$ an uniform rotation of type $\vec{M}_{1-2}=(0,q,0)$ is spectrally unstable and also it is unstable in the sense of Lyapunov.

In this case the stability (in the sense of Lyapunov) can be decided using the stability with respect to the set of conserved quantities $\{F_1,F_2\}$ and the linearization method. An uniform rotation is stable with respect to the set of conserved quantities $\{F_1,F_2\}$ if and only if it is stable in the sense of Lyapunov.

\subsection{The case $\mu_1=\mu_2=0$}

We have the following types of uniform rotations:
\begin{itemize}
  \item [i.] $\vec{M}_{1-2}=(0,0,q)$ where $q\in \mathbb{R}$;
  \item [ii.] $\vec{M}_{3}=(q,0,\frac{I_3}{I_1-I_3}\mu_3)$ where $q\in \mathbb{R}^*$;
  \item [iii.] $\vec{M}_{4}=(0,q,\frac{I_3}{I_2-I_3}\mu_3)$ where $q\in \mathbb{R}^*$.
\end{itemize}

\begin{thm} For a vector of gyrostatic moment along the third axis of inertia we have the following stability results.
\begin{itemize}
\item [(i)] An uniform rotation of type $\vec{M}_{1-2}=(0,0,q)$ is stable with respect to the set of conserved quantities $\{F_1,F_2\}$ if and only if $q\in (-\infty, \frac{I_3\mu_3}{I_1-I_3}]\cup (\frac{I_3\mu_3}{I_2-I_3},\infty)$ and $\mu_3>0$ or $q\in (-\infty, \frac{I_3\mu_3}{I_2-I_3})\cup [\frac{I_3\mu_3}{I_1-I_3},\infty)$ and $\mu_3<0$.
\item [(ii)] An uniform rotation of type $\vec{M}_3=(q,0,\frac{I_3}{I_1-I_3}\mu_3)$ with $q\neq 0$ is stable with respect to the set of conserved quantities $\{F_1,F_2\}$.
\item [(iii)] An uniform rotation of type $\vec{M}_4=(0,q,\frac{I_3}{I_2-I_3}\mu_3)$ with $q\neq 0$ is not stable with respect to the set of conserved quantities $\{F_1,F_2\}$.

\end{itemize}
\end{thm}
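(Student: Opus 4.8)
The plan is to use the algebraic criterion of Theorem~\ref{stability}~$(iii)$: a uniform rotation $\vec{M}_e$ is stable with respect to $\{F_1,F_2\}$ if and only if $\vec{M}_e$ is isolated in the solution set of $F_1(\vec{M})=F_1(\vec{M}_e)$, $F_2(\vec{M})=F_2(\vec{M}_e)$. For $\mu_1=\mu_2=0$ this system is
\begin{equation*}
\frac{M_1^2}{I_1}+\frac{M_2^2}{I_2}+\frac{M_3^2}{I_3}=\frac{M_{1e}^2}{I_1}+\frac{M_{2e}^2}{I_2}+\frac{M_{3e}^2}{I_3},\quad M_1^2+M_2^2+(M_3+\mu_3)^2=M_{1e}^2+M_{2e}^2+(M_{3e}+\mu_3)^2.
\end{equation*}
Mimicking the case $\mu_2=\mu_3=0$, but now eliminating the axis that carries the gyrostatic moment, I would set $x=M_3-M_{3e}$ and treat $M_1^2,M_2^2$ as the unknowns of the resulting linear system; its coefficient determinant $\frac{1}{I_1}-\frac{1}{I_2}=\frac{I_2-I_1}{I_1I_2}$ is nonzero, so Cramer's rule expresses $M_1^2$ and $M_2^2$ as explicit functions of $x$. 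The equilibrium is isolated precisely when, for all small $x\neq0$, at least one of these two quantities is negative (no real $\vec{M}$), and it fails to be isolated when a sign of $x$ makes both nonnegative.

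For parts $(ii)$ and $(iii)$ the computation simplifies because the third coordinate of the equilibrium equals one of the critical values $\frac{I_3\mu_3}{I_1-I_3}$, $\frac{I_3\mu_3}{I_2-I_3}$. For $\vec{M}_3=(q,0,\frac{I_3\mu_3}{I_1-I_3})$ the substitution $M_{3e}=\frac{I_3\mu_3}{I_1-I_3}$ should cancel the $x$-linear part of the numerator of $M_2^2$, leaving
\begin{equation*}
M_2^2=-\frac{I_2(I_1-I_3)}{I_3(I_1-I_2)}\,x^2<0\qquad(x\neq0),
\end{equation*}
so no nontrivial nearby solution exists and $\vec{M}_3$ is stable. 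For $\vec{M}_4=(0,q,\frac{I_3\mu_3}{I_2-I_3})$ the same cancellation occurs in $M_1^2$, giving $M_1^2=\frac{I_1(I_2-I_3)}{I_3(I_1-I_2)}x^2>0$ for $x\neq0$, while $M_2^2=q^2+O(x)>0$ for small $x$; thus both squares stay nonnegative along a whole curve through $\vec{M}_4$, the equilibrium is not isolated, and it is not stable.

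For part $(i)$, with $\vec{M}_e=(0,0,q)$, the Cramer expressions take the clean form
\begin{equation*}
M_1^2=\frac{I_1\,x\,[(x+2q)(I_2-I_3)-2\mu_3I_3]}{I_3(I_1-I_2)},\qquad M_2^2=-\frac{I_2\,x\,[(x+2q)(I_1-I_3)-2\mu_3I_3]}{I_3(I_1-I_2)},
\end{equation*}
so for small $x$ one has $\operatorname{sgn}(M_1^2)=\operatorname{sgn}(x)\operatorname{sgn}(c_1)$ and $\operatorname{sgn}(M_2^2)=-\operatorname{sgn}(x)\operatorname{sgn}(c_2)$, where $c_1=2(q(I_2-I_3)-\mu_3I_3)$ and $c_2=2(q(I_1-I_3)-\mu_3I_3)$. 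When $c_1,c_2$ have the same sign no choice of $\operatorname{sgn}(x)$ makes both squares positive and $\vec{M}_e$ is isolated (stable); when they have opposite signs a suitable $x$ produces a nearby solution and $\vec{M}_e$ is not stable. Since $c_1c_2=4(I_2-I_3)(I_1-I_3)(q-a)(q-b)$ with $a=\frac{I_3\mu_3}{I_2-I_3}$ and $b=\frac{I_3\mu_3}{I_1-I_3}$, the stability condition $c_1c_2>0$ becomes $(q-a)(q-b)>0$, which after inserting the ordering of $a,b$ forced by the sign of $\mu_3$ reproduces the open unions in the statement.

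The main obstacle will be the two endpoints $q=a$ and $q=b$, where the leading-order sign test is inconclusive and must be replaced by exact $x^2$-identities. At $q=b$ (where $c_2=0$) the cancellation of part $(ii)$ reappears and yields $M_2^2\propto-x^2<0$, so the endpoint $q=\frac{I_3\mu_3}{I_1-I_3}$ is stable and must be included; at $q=a$ (where $c_1=0$) one instead gets $M_1^2\propto+x^2>0$ for all $x\neq0$ together with $M_2^2>0$ for one sign of $x$, so the endpoint $q=\frac{I_3\mu_3}{I_2-I_3}$ is not stable and must be excluded. This asymmetry --- closed on the $(I_1-I_3)$ side, open on the $(I_2-I_3)$ side --- is exactly the delicate point that pins down the half-open intervals; the remaining work is the routine sign bookkeeping already carried out for the case $\mu_2=\mu_3=0$.
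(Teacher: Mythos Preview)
Your proposal is correct and follows exactly the route the paper indicates: the paper does not write out a separate proof here but refers the reader to the method of Section~3.1, and you have carried out precisely those analogous computations (the substitution $x=M_3-M_{3e}$, Cramer's rule for $M_1^2,M_2^2$, the sign analysis via $c_1,c_2$, and the exact $x^2$-identities at the two endpoints). Your endpoint discussion---$q=\tfrac{I_3\mu_3}{I_1-I_3}$ stable, $q=\tfrac{I_3\mu_3}{I_2-I_3}$ unstable---correctly reproduces the half-open intervals and matches the paper's treatment of the analogous endpoints in Section~3.1.
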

In the paper \cite{puta-comanescu} is proved that the equilibrium points of the system \eqref{puta-comanescu} has the properties:
\begin{itemize}
\item [(i)] for $\lambda\in (\frac{1}{I_1}, \frac{1}{I_2})$ an equilibrium point of type $\vec{N}_2$ is spectrally unstable;
\item [(ii)] an equilibrium point of type $\vec{N}_4$ is spectrally unstable when $\alpha\neq 0$.
\end{itemize}
These results implies:
\begin{itemize}
\item [(i)] for $q\in (\frac{I_3\mu_3}{I_1-I_3}, \frac{I_3\mu_3}{I_2-I_3})$ and $\mu_3>0$ or for $q\in (\frac{I_3\mu_3}{I_2-I_3}, \frac{I_3\mu_3}{I_1-I_3})$ and $\mu_3<0$ an uniform rotation of type $\vec{M}_{1-2}$ is spectrally unstable and consequently, it is unstable in the sense of Lyapunov;
\item [(ii)] an uniform rotation of type $\vec{M}_4$ is spectrally unstable and also it is unstable in the sense of Lyapunov.
\end{itemize}

The Lyapunov stability or instability of $\vec{M}_e=(0,0,\frac{I_3\mu_3}{I_2-I_3})$ cannot be decided using the set of conserved quantities $\{F_1,F_2\}$ or using the linearization method. This uniform rotation is spectrally stable and it is not stable  with respect to the set of of conserved quantities $\{F_1,F_2\}$. The instability in the sense of Lyapunov of this uniform rotation will be proved by studying the dynamics on the invariant set $$\mathcal{M}=\{\vec{M}\,|\,F_1(\vec{M})=F_1(\vec{M_e}),\,\,F_2(\vec{M})=F_2(\vec{M_e})\}.$$
The proof is analogous to the proof of the Theorem \ref{stabilitate-singular-23}.
\begin{thm}
The uniform rotation $(0,0,\frac{I_3\mu_3}{I_2-I_3})$ is unstable in the sense of Lyapunov.
\end{thm}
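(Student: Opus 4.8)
The plan is to imitate the proof of Theorem \ref{stabilitate-singular-23}, transporting the whole argument from the first to the third axis. First I would introduce the shifted variable $x=M_3-\frac{I_3\mu_3}{I_2-I_3}$, so that the equilibrium under study corresponds to $x=0$, and project the vectorial equation \eqref{torque-free-rotations} onto the third axis. Since $\mu_1=\mu_2=0$, the third component of $(\vec M+\vec\mu)\times\mathbb I^{-1}\vec M$ collapses to $M_1M_2\left(\frac{1}{I_2}-\frac{1}{I_1}\right)$, giving
\begin{equation}
\dot x=\left(\frac{1}{I_2}-\frac{1}{I_1}\right)M_1M_2,
\end{equation}
and the factor $\frac{1}{I_2}-\frac{1}{I_1}=\frac{I_1-I_2}{I_1I_2}$ is strictly positive under the standing hypothesis $I_1>I_2>I_3$.

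Next I would solve the algebraic system $F_1(\vec M)=F_1(\vec M_e)$, $F_2(\vec M)=F_2(\vec M_e)$ for this equilibrium, treating $M_1^2$ and $M_2^2$ as the unknowns of a linear $2\times2$ system with $x$ as a parameter and applying Cramer's rule, exactly as in the previous subsection. A direct computation, using $M_{3e}+\mu_3=\frac{I_2\mu_3}{I_2-I_3}$, yields the clean square $M_1^2=\frac{I_1(I_2-I_3)}{(I_1-I_2)I_3}\,x^2$, which is nonnegative for all $x$, together with the companion relation $M_2^2=-x\Big[\frac{I_2(I_1-I_3)}{(I_1-I_2)I_3}\,x+\frac{2I_2\mu_3}{I_2-I_3}\Big]$, the analog of \eqref{first-axis-rezolvat-2}. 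Substituting both into $\dot x^2=\left(\frac{1}{I_2}-\frac{1}{I_1}\right)^2M_1^2M_2^2$ then expresses $\dot x^2$ as an explicit cubic-type expression in $x$, whose behaviour near $x=0$ is governed by the constant term $\frac{2I_2\mu_3}{I_2-I_3}$ inside the bracket defining $M_2^2$.

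The conclusion would then follow the same sign argument. The bracket defining $M_2^2$ vanishes at the threshold $x_*=-\frac{2(I_1-I_2)I_3\mu_3}{(I_1-I_3)(I_2-I_3)}$, whose distance $|x_*|$ from the origin is a fixed positive quantity, independent of how small the perturbation is. Consider first $\mu_3>0$, so that $x_*<0$; for any $x(0)\in(x_*,0)$ arbitrarily close to $0$ one has $M_1^2>0$ and $M_2^2>0$, and I would choose the signs of $M_1(0),M_2(0)$ so that $M_1(0)M_2(0)<0$, hence $\dot x(0)<0$. Since $\dot x$ can vanish only where $M_1M_2=0$, i.e. only at $x=0$ or at $x=x_*$, the component $x(t)$ decreases monotonically and there exists a finite $t^*>0$ with $x(t^*)=x_*$. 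The case $\mu_3<0$ (with $x_*>0$ and $\dot x(0)>0$) is symmetric. Thus an arbitrarily small perturbation on the invariant set $\mathcal M$ drives the trajectory the fixed distance $|x_*|$ away from the equilibrium, contradicting Lyapunov stability.

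The only genuinely delicate point, exactly as in Theorem \ref{stabilitate-singular-23}, is justifying that $\dot x$ keeps its sign all the way to the threshold. I must verify that the chosen initial data really lie on $\mathcal M$ and that the solution stays on the branch where $M_1^2,M_2^2$ are the nonnegative expressions above: since $M_1$ vanishes only at $x=0$ and $M_2$ only at $x\in\{0,x_*\}$, neither factor changes sign in the open interval $(x_*,0)$, so the product $M_1M_2$ has constant sign there and the trajectory cannot turn back through the interior. Everything else is the routine Cramer computation that I have only sketched here.
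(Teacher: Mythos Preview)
Your proposal is correct and follows exactly the route the paper intends: the paper itself says only that ``the proof is analogous to the proof of Theorem \ref{stabilitate-singular-23}'', and your transposition to the third axis---the shift $x=M_3-M_{3e}$, the projection $\dot x=(1/I_2-1/I_1)M_1M_2$, the Cramer computation giving $M_1^2=\dfrac{I_1(I_2-I_3)}{(I_1-I_2)I_3}x^2$ and $M_2^2=-x\!\left[\dfrac{I_2(I_1-I_3)}{(I_1-I_2)I_3}x+\dfrac{2I_2\mu_3}{I_2-I_3}\right]$, and the sign argument driving $x$ toward the fixed threshold $x_*$---is precisely that analogy carried out. If anything, your justification of why $\dot x$ cannot change sign on $(x_*,0)$ is more careful than the corresponding step in the paper's own proof of Theorem \ref{stabilitate-singular-23}.
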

\medskip

An uniform rotation is stable with respect to the set of conserved quantities $\{F_1,F_2\}$ if and only if it is stable in the sense of Lyapunov.
\medskip

\noindent {\bf Acknowledgement} The author would like to acknowledge the many conversations with Petre Birtea.
\medskip

\end{document}